\newtheorem{theorem}{Theorem}
\newtheorem{lemma}{Lemma}
\newcommand{\old}[1]{{}}
\newcommand{\RR}{\mathbb{R}}
\def\A{\mathcal A}
\def\x{\mathbf x}
\def\y{\mathbf y}
\title{On convexification of polygons by pops}
\author{%
Adrian Dumitrescu\thanks{Department of Computer Science,
University of Wisconsin--Milwaukee,
WI 53201-0784, USA\@. Email: \texttt{ad@cs.uwm.edu}.
Supported in part by NSF CAREER grant CCF-0444188.}
\and
Evan Hilscher
\thanks{Department of Computer Science,
University of Wisconsin--Milwaukee,
WI 53201-0784, USA\@. Email: \texttt{hilscher@uwm.edu}.
}}
\begin{document}

\maketitle

\begin{abstract}
Given a polygon $P$ in the plane, a {\em pop} operation is the reflection of
a vertex with respect to the line through its adjacent vertices. 
We define a family of alternating polygons, and show that any polygon
from this family cannot be convexified by pop operations. 
This family contains simple, as well as non-simple (i.e.,
self-intersecting) polygons, as desired. We thereby answer in the
negative an open problem posed by Demaine and O'Rourke~\cite[Open
Problem 5.3]{DO07}. 
\end{abstract}

\medskip
\hspace{0.15in}
\textbf{\small Keywords}:
Polygon convexification, edge-length preserving transformation,
pop operation.

\section{Introduction}

Consider a polygon $P=\{p_1,\ldots,p_n\}$ in the plane, that could
be simple or self-intersecting. 
A {\em pop} operation is the reflection of a vertex, say $p_i$, with respect
to the line through its adjacent vertices $p_{i-1}$ and  $p_{i+1}$
(as usual indexes are taken modulo $n$, i.e., $p_{n+1}=p_1$)~\cite{Ba03}. 
Observe that for the operation to be well-defined we need that 
$p_{i-1}$ and $p_{i+1}$ are distinct. This operation belongs to the
larger class of edge-length preserving transformations, when applied to
polygons~\cite{Ba03,Ro91,RW94,Sa73,SB92}. 
It seems to have been used for the first time by Millet~\cite{Mi94}.
If instead of reflecting $p_i$ with respect to the line through its
adjacent vertices $p_{i-1}$ and $p_{i+1}$, 
the reflection is executed with respect to the midpoint of 
$p_{i-1}$ and $p_{i+1}$, the operation is called a {\em popturn}; see
\cite{ABB+07,Ba03}. Observe that both the pop and the popturn are
single-vertex operations. 

Each is an instance of a ``flip'', defined informally, 
which has been studied at length. The most common variant
of flip is the {\em pocket flip} (or just {\em flip}), first
considered by Erd{\"o}s~\cite{Er35}. Another variant is the {\em
flipturn}, first considered by Kazarinoff, and later by Joss and Shannon; 
see~\cite{DGO+06b,Gr95} for an account of their results. 
In contrast with pops and popturns, both the flip and the flipturn may
involve multiple vertices. The inverse of a pocket flip, called {\em
deflation}, has been also considered~\cite{DDF+08,FHM+00}.
We briefly describe pocket flips and pocket flipturns next.

Assume that we deal with simple polygons in this paragraph.
A \emph{pocket} is a region exterior to the polygon but interior to
its convex hull, bounded by a subchain of the polygon edges and the
pocket \emph{lid}, the edge of the convex hull connecting the
endpoints of that subchain; see e.g., \cite[p.~74]{DO07}. Observe that
any non-convex polygon has at least one pocket. 
A {\em flip} of a pocket consists of reflecting the
pocket about the line through the pocket lid. Instead, a {\em flipturn}
of a pocket consists of reflecting the pocket about the midpoint
of the pocket lid. Observe that if $P$ is simple and non-convex, 
the polygons resulting after a pocket flip, or a pocket flipturn are
again simple.  It is known that within both of these variants,
convexification can be achieved. More precisely:  
given a simple polygon, it can be convexified by a finite sequence of 
pocket flips~\cite{DGO+06b,Gr95,GZ01,KB61,Na39,Re57,To99,W93,Yu57}.
Similarly, it can be convexified by a finite sequence of 
pocket flipturns~\cite{Gr95}. Moreover, the first result continues to hold
for self-intersecting polygons, under broad assumptions,
see~\cite{DGO+06b}. While the convexifying sequence can be arbitrarily
long for pocket flips (i.e., irrespective of $n$, the number of vertices), 
a quadratic number of operations always suffices in the case of
flipturns~\cite{ABC+00,ACD+02,Bi06}. 
There is an extensive bibliography pertaining to these 
subjects~\cite{ABC+00,ADE+01,ACD+02,ABB+07,Ba03,Bi06,DGO+06b,DO07,
Er35,Gr95,GZ01,Ka81,KB61,Na39,Re57,To99,W93,Yu57}.
See also ~\cite{Ba03,Ro91,RW94,Sa73,SB92} for more results
on edge-length preserving transformations and chord stretching.

In this paper we focus on pop operations. 
Thurston gave an example of a simple polygon that becomes
self-intersecting with any pop, see~\cite[p.81]{DO07}.
Ballinger and Thurston showed (according to~\cite[p.~81]{DO07})
that almost any simple polygon can be convexified by pops if
self-intersection is permitted; however no proof has been published. 
As Ballinger writes in his thesis~\cite{Ba03}, ``pops are very natural
transformations to consider, but the analysis of polygon
convexification by pops seems very tricky''. 
It has remained an open problem whether there exist polygons that
cannot be convexified by pops~\cite[Open Problem 5.3]{DO07}. 
We show here that such polygons do indeed exist, from both classes,
simple or self-intersecting, thereby answering the above open problem
in its full generality. 

In Section~\ref{sec:alt}, for every even $n \geq 6$, we define a family
$\A_n$ of {\em alternating polygons}, and show that any polygon from this
family cannot be convexified by pop operations. This family contains
simple, as well as non-simple (i.e., self-intersecting) polygons, as desired. 
It is interesting that this family is closed under pop operations: any
pop operation applied to a polygon in $\A_n$, at any vertex,
yields a polygon in $\A_n$ .

\section{Alternating polygons} \label{sec:alt}

Recall that in order for the pop operation on a vertex $p_i$ be well defined, 
its neighbors, $p_{i-1}$ and $p_{i+1}$ need to be distinct, so that 
the reflection line through them is unique, hence the reflection of
$p_i$ is also unique. A condition on the edge lengths of the polygon
that guarantees this is that no two edges have the same length; 
such a polygon is called {\em scalene}~\cite[p.~24]{Ba03}. 
A weaker condition that suffices is that no two consecutive edges have
the same length; we call such polygons {\em weakly scalene}. 
Our family of polygons $\A_n$ we define below consists of  weakly
scalene polygons.  

If $p_{i-1}$ and $p_{i+1}$ coincide, $p_i$ is called a {\em hairpin
vertex}~\cite{ABB+07}. Popping a hairpin vertex is undefined because
there are an infinite number of reflection lines through $p_{i-1}$ and
$p_{i+1}$. Our family of polygons is specifically designed to avoid
any occurrence of hairpin vertices. See~\cite{ABB+07} for a possible
adaptation of pops to hairpin vertices. 

Let $n$ be even. Fix a coordinate system in the plane.
We say that a polygon $P=\{p_1,p_2,\ldots,p_n\}$ with $n$ distinct
vertices is {\em alternating} if its vertices lie
alternately on the two axes: say, the vertices with {\em odd} indexes
on the $x$-axis, and the vertices with {\em even} indexes
on the $y$-axis. See Fig.~\ref{f1} for an illustration.
\begin{figure}[htbp]
\centerline{\epsfxsize=6.3in \epsffile{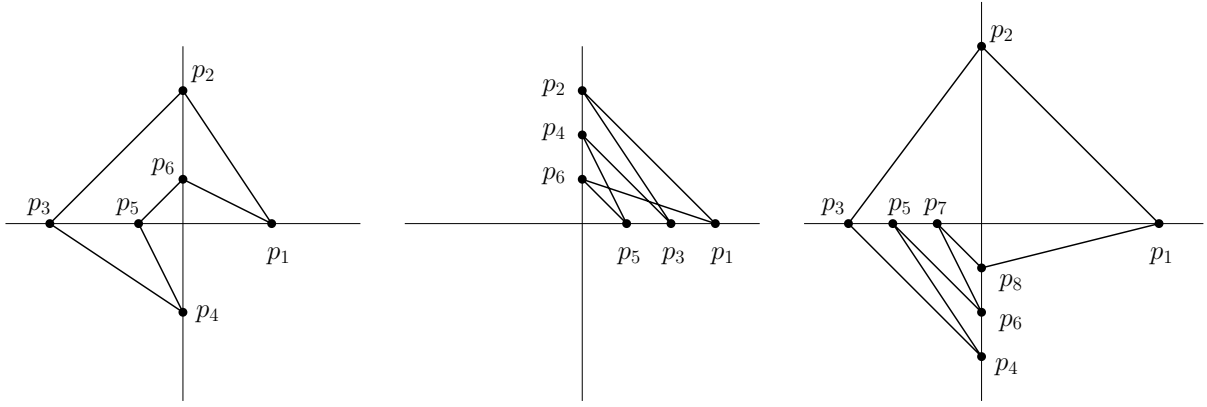}}
\caption{\small Alternating polygons with $6$ and $8$ vertices:
$A((2,3,1),(3,2,1),(+1,+1,-1,-1,-1,+1))$,
$A((3,2,1),(3,2,1),(+1,+1,+1,+1,+1,+1))$, and 
$A((4,3,2,1),(4,3,2,1),(+1,+1,-1,-1,-1,-1,-1,-1))$. 
The one in the middle is self-intersecting.} 
\label{f1}
\end{figure}

Let $n=2k$. Let $\x=(x_1,x_2,\cdots,x_k)$, and $\y=(y_1,y_2,\cdots,y_k)$ 
be two vectors in the positive orthant of $\RR^k$, 
each having distinct nonzero coordinates, that is:
\begin{align*} \label{E1}
i \in \{1,2,\ldots,k\} &\quad\Rightarrow\quad x_i>0 {\rm \ and \ } y_i>0, \\
i,j \in \{1,2,\ldots,k\}  {\rm \ and \ } i \neq j &\quad\Rightarrow\quad 
x_i \neq x_j {\rm \ and \ } y_i \neq y_j. 
\end{align*} 
Let $\sigma=(\sigma_1,\sigma_2,\cdots,\sigma_{2k}) \in \{-1,+1\}^{2k}$
be a binary sign vector. 
Consider the alternating polygon $A(\x,\y,\sigma)=\{p_1,p_2,\ldots,p_{2k}\}$, where 
\begin{itemize}
\item $p_{2i+1}=(\sigma_{2i+1} \cdot x_{i+1},0)$, for $i=0,\ldots,k-1$. 
\item $p_{2i}=(0,\sigma_{2i} \cdot y_i)$, for $i=1,\ldots,k$. 
\end{itemize}
Let $\A_n$ ($\equiv \A_{2k}$) be the family of all alternating
polygons $A(\x,\y,\sigma)$ defined 
as above. First note that $\A_n$ contains both simple, as well as
non-simple (i.e., self-intersecting) polygons.  

Indeed, consider the polygon $P_1$ described next.
Let $x_1=y_1=k$, and $x_i=y_i=k-i+1$, for $i=2,\ldots,k$. 
Let $\sigma=(+1,+1,-1,\ldots,-1)$.
It is easy to see that $P_1 \in \A_n$ is a simple polygon. An
example is shown in Fig.~\ref{f1}~(right).  

Consider now the polygon $P_2$ described as follows.
Let $x_i=y_i=k-i+1$, for $i=1,\ldots,k$. Let $\sigma=(+1,\ldots,+1)$.
It is easy to see that $P_2 \in \A_n$ is a self-intersecting polygon. An
example is shown in Fig.~\ref{f1}~(middle).  

A sequence of pops executed on an alternating simple polygon with 
$6$ vertices appears in Fig.~\ref{f2}.  
\begin{figure}[t]
\centerline{\epsfxsize=5in \epsffile{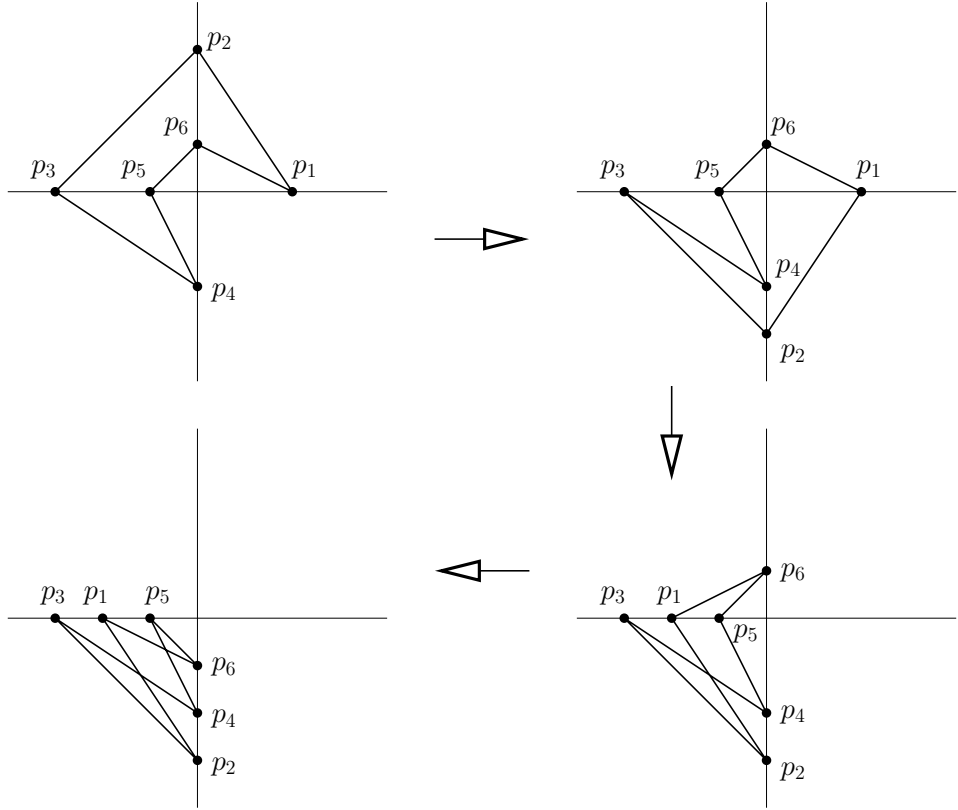}}
\caption{\small Sequence of three pops executed on vertices $p_2$, $p_1$, and
$p_6$ of $\{p_1,p_2,p_3,p_4,p_5,p_6\}$. The corresponding polygon sequence is 
$A((2,3,1),(3,2,1),(+1,+1,-1,-1,-1,+1)) \Rightarrow$
$A((2,3,1),(3,2,1),(+1,-1,-1,-1,-1,+1)) \Rightarrow$
$A((2,3,1),(3,2,1),(-1,-1,-1,-1,-1,+1)) \Rightarrow$
$A((2,3,1),(3,2,1),(-1,-1,-1,-1,-1,-1))$.}
\label{f2}
\end{figure}
A key fact regarding alternating polygons is the following:

\begin{lemma} \label{L1}
If $P \in A_{2k}$ is convex, then $k \leq 2$.
\end{lemma}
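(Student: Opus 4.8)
The plan is to reduce the whole statement to one elementary fact about convexity: the boundary of a convex polygon meets any line in at most two points, provided that line contains no edge of the polygon. Since all $k$ odd-indexed vertices of $A(\x,\y,\sigma)$ lie on the $x$-axis, applying this fact to the $x$-axis will immediately force $k \le 2$.

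First I would record the trivial but essential observation that no vertex of an alternating polygon sits at the origin: because $x_i>0$ and $y_i>0$ for every $i$, each odd vertex $p_{2i+1}=(\pm x_{i+1},0)$ has nonzero abscissa and each even vertex $p_{2i}=(0,\pm y_i)$ has nonzero ordinate. Consequently every edge of $P$ joins a point of the $x$-axis to a point of the $y$-axis whose ordinate is nonzero, so no edge of $P$ is contained in the $x$-axis. Next I would state (and, for completeness, justify) the convexity fact itself: if $P$ is convex and bounds the convex region $C$, and $\ell$ is a line containing no edge of $P$, then $\ell\cap\partial C$ has at most two points. Indeed $\ell\cap C$ is a segment $[a,b]$, and any third boundary point $c$ on $\ell$ would lie in the open segment $(a,b)$; but for a convex region a boundary point strictly interior to a chord between two boundary points forces the entire chord onto a supporting line, hence onto the boundary, which here would put an edge on $\ell$ and contradict the hypothesis. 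Applying this with $\ell$ the $x$-axis, on which the $k$ distinct odd-indexed vertices lie and on which no edge lies, yields that at most two of these vertices can be present, i.e. $k\le 2$.

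Since the reduction is short, the difficulty lies not in any single computation but in choosing the right invariant. The point most easily mishandled is that the $x$-axis vertices $p_1,p_3,\dots$ are \emph{not} consecutive along the polygon — they are interleaved with the $y$-axis vertices — so one cannot argue by exhibiting a single flat interior angle. The global statement about a line meeting a convex boundary is exactly what bridges this gap, and it also disposes uniformly of the weakly-convex case (three collinear vertices) and of the degenerate configurations in which the $x$-axis is merely tangent to, or entirely misses, $C$, since in those cases the number of axis vertices only drops further. By the symmetric argument applied to the $y$-axis and the even-indexed vertices, the same bound $k\le 2$ follows, so either axis alone suffices.
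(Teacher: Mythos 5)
Your proof is correct and takes essentially the same route as the paper's: both rest on the fact that a convex boundary meets a line in at most two points unless an edge of the polygon lies on that line, and both exclude this degenerate case using the alternating structure (the paper by noting that three consecutive collinear vertices on an axis would contradict alternation, you by noting that no vertex sits at the origin, so no edge can lie on an axis). Your write-up merely supplies more detail than the paper's terse argument, namely the supporting-line justification of the two-point fact and the explicit origin observation.
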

\begin{proof}
Since $P$ is convex, it intersects each of the coordinate axes in at
most two points, unless it is tangent to one of the coordinate axes,
and there are three consecutive collinear vertices on that axis. 
However this latter possibility would contradict the alternating
property of $P$. So the only alternative is the former, in which case
we have $k \leq 2$. Observe that the given inequality on $k$ cannot
be improved. 
\end{proof}

The following properties are easy to verify:
\begin{enumerate}
\item $A(\x,\y,\sigma)$ has $2k$ distinct vertices.
\item $A(\x,\y,\sigma)$ is weakly scalene.
\item The pop operation applied to the vertex $p_i$ of $A(\x,\y,\sigma)$,
($1 \leq i \leq 2k$), yields $A(\x,\y,\sigma')$, where $\sigma'$ differs
from $\sigma$ only in the $i$th bit. That is, the absolute value of 
the non-zero coordinate of $p_i$ remains the same, with the point
switching to its mirror image with respect to the origin of the axes.
In particular, this implies that the family $\A_{2k}$ is closed with
respect to pop operations. 
\item Let $\x,\y$ be fixed, with the above properties,
and $\sigma, \sigma'$ be two sign vectors. 
Consider $P=A(\x,\y,\sigma)$, and $P'=A(\x,\y,\sigma')$.
Then $P'$ can be obtained from $P$ by executing at most $n$ pops, via: 
For $i=1$ to $n$ do: if $\sigma_i \neq \sigma'_i$, then pop $p_i$ to
$p'_i$.  
\end{enumerate}

We are now ready to prove our main result:
\begin{theorem}\label{T1}
Let $n=2k$, where $k \geq 3$. 
Any polygon in the family $\A_n$ is non-convexifiable by pop operations.
\end{theorem}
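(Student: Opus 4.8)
The plan is to combine the structural properties 1–4 with the convexity obstruction in Lemma~\ref{L1}. The crucial observation is property~3: any pop applied to a polygon in $\A_n$ merely flips a single sign bit $\sigma_i$, leaving the coordinate vectors $\x$ and $\y$ completely unchanged. Thus the entire pop-orbit of a given polygon $A(\x,\y,\sigma)$ is contained in the finite set $\{A(\x,\y,\tau) : \tau \in \{-1,+1\}^{2k}\}$, all sharing the same fixed $\x$ and $\y$. Convexifying $A(\x,\y,\sigma)$ by pops would therefore mean reaching some convex $A(\x,\y,\tau)$ through a sequence of single-bit flips. Since by Lemma~\ref{L1} no polygon in $\A_{2k}$ with $k \geq 3$ is convex, no such convex target exists in the orbit at all, and the claim follows immediately.

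Concretely, I would argue as follows. Fix $n = 2k$ with $k \geq 3$ and let $P = A(\x,\y,\sigma) \in \A_n$ be arbitrary. By property~3, every pop applied to a polygon of the form $A(\x,\y,\tau)$ produces another polygon $A(\x,\y,\tau')$ with the same $\x$ and $\y$ (only one sign bit changes). By induction on the number of pops, every polygon reachable from $P$ via a finite sequence of pops is of the form $A(\x,\y,\tau)$ for some sign vector $\tau$, hence lies in $\A_{2k}$. But Lemma~\ref{L1} states that any convex member of $\A_{2k}$ forces $k \leq 2$; contrapositively, since $k \geq 3$, no member of $\A_{2k}$ is convex. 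In particular no polygon reachable from $P$ is convex, so $P$ cannot be convexified by pops.

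One subtlety worth checking is that every pop in any such sequence is \emph{well-defined}, i.e., that we never encounter a hairpin vertex whose neighbors coincide and leave the reflection line undetermined. This is guaranteed by property~2: each $A(\x,\y,\tau)$ is weakly scalene (no two consecutive edges have equal length, since the coordinates of $\x$ and $\y$ are distinct and nonzero), so the neighbors $p_{i-1}$ and $p_{i+1}$ of any vertex $p_i$ are always distinct and the reflection line through them is unique. Hence every pop in the sequence is legitimate and the inductive argument goes through without obstruction.

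I do not anticipate a genuine difficulty here: once Lemma~\ref{L1} and property~3 are in hand, the theorem is essentially immediate, since the whole point is that $\A_{2k}$ is closed under pops while containing no convex polygon when $k \geq 3$. The only point requiring care is the bookkeeping that the reachable set stays inside $\A_{2k}$ and that each pop is well-defined; both follow directly from the stated properties. The conceptual weight of the argument has been front-loaded into Lemma~\ref{L1}, whose proof uses the fact that a convex curve meets a line in at most two points (away from tangency), which is incompatible with having $k \geq 3$ alternating intercepts on each axis.
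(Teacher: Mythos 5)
Your proposal is correct and follows exactly the paper's own argument: the family $\A_{2k}$ is closed under pops (property~3), so every polygon reachable from $P$ stays in $\A_{2k}$, and by Lemma~\ref{L1} no member of $\A_{2k}$ is convex when $k \geq 3$. Your added check that each pop is well-defined via weak scaleneness (property~2) is a careful touch the paper handles in its preliminary discussion rather than inside the proof, but the route is the same.
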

\begin{proof}
Consider a polygon $P \in A_{2k}$.
(We can choose $P$ simple, or self-intersecting, as desired.)
By Lemma~\ref{L1}, $P$ is not convex. Apply any finite sequence of pop
operations. By the second property (2.) above, the resulting polygon also
belongs to $A_{2k}$, and is therefore not convex. 
\end{proof}

\section{Conclusion} \label{sec:conclusion}

We have shown that there exists a family of polygons that cannot be
convexified by a finite sequence of pops. However, there exist many
polygons that can be convexified in this way. We conclude with two questions: 
\begin{enumerate}
\item What is the computational complexity of deciding whether a
given (simple or self-crossing) polygon can be convexfied by a 
finite sequence of pops?
\item How hard is it to find a shortest sequence of pops that
convexifies a given polygon (assuming it is convexifiable in this
way)? Do good approximation algorithms exist for this problem?
\end{enumerate}

\end{document}